\def\wh{\widehat}
\def\wt{\widetilde}
\def\ol{\overline}
\numberwithin{equation}{section}
\theoremstyle{definition}
\theoremstyle{plain}
\newtheorem{thm}{\protect\theoremname}[section]
\theoremstyle{plain}
\numberwithin{equation}{section}
\date{}
\providecommand{\conditionname}{Condition}
\providecommand{\lemmaname}{Lemma}
\providecommand{\theoremname}{Theorem}
\newcommand{\blind}{0}
\begin{document}

\def\spacingset#1{\renewcommand{\baselinestretch}%
{#1}\small\normalsize} \spacingset{1}

%%%%%%%%%%%%%%%%%%%%%%%%%%%%%%%%%%%%%%%%%%%%%%%%%%%%%%%%%%%%%%%%%%%%%%%%%%%%%%

\if0\blind
{
  \title{\bf {Scalable Estimation and Inference with 
  		Large-scale or Online Survival Data
  	}
  }
  \author{    Jinfeng Xu \\
    Department of Statistics and Actuarial Science\\
    The University of Hong Kong\\
    and \\
    Zhiliang Ying
    \\
    Department of Statistics\\
    Columbia University\\
    and \\
    Na Zhao \\
    Department of Statistics and Actuarial Science\\
    The University of Hong Kong   
	}
  \maketitle
} \fi

\if1\blind
{
  \bigskip
  \bigskip
  \bigskip
  \begin{center}
    {\LARGE\bf Scalable inference with large-scale or online survival data
\end{center}
  \medskip
} \fi

%\smallskip
\begin{abstract}
With the rapid development of data collection and aggregation technologies in many scientific disciplines, it is becoming increasingly ubiquitous to conduct large-scale or online regression to analyze real-world data and unveil real-world evidence.  In such applications, it is often numerically challenging or sometimes infeasible to store the entire dataset in memory. Consequently, classical batch-based estimation methods that involve the entire dataset are less attractive or no longer applicable. Instead, recursive estimation methods such as stochastic gradient descent that process data points sequentially  are more appealing, exhibiting both numerical convenience and memory efficiency. In this paper, for scalable estimation with randomly 
censored large or online survival data in the accelerated failure time model, we propose a stochastic gradient descent method which recursively updates the estimates in an online manner as data points arrive sequentially in streams.
%upon receiving a fixed number of data points in streams.  
Theoretical results such as asymptotic normality and estimation efficiency are established to justify its validity. Furthermore, to quantify the uncertainty associated with the proposed stochastic gradient descent estimator and facilitate statistical inference, we develop a scalable resampling strategy that specifically caters to the large-scale or online setting.  Simulation studies and a real data application are also
	provided to assess its performance and illustrate its practical utility.
\end{abstract}

\noindent%
{\it Keywords:}   Censored regression;  Large-scale data; Online bootstrap; Real-world data; Real-world evidence; Resampling;  Streaming data; Survival data
\vfill

\newpage
\spacingset{1.5} % DON'T change the spacing!

\section{Introduction}
The volume and velocity of information about individual patients or customers are greatly increasing with use of electronic records and personal device. Potential benefits of utilizing such information could be numerous, ranging from the ability to determine large-scale effects of treatment to the ability to monitor real-time effects of treatment on a general population. In the context of this wealth of real-world data (RWD), it is often necessary to conduct large-scale or online regression to unveil real-world evidence (RWE). For large-scale or online survival data, the response variable is survival time and is subject to possible right censoring. Here online
data refers to the situation where the observations arrive sequentially in time and once they are processed, they are discarded for memory or privacy reasons. 
Let $T$ be the survival time, $C$ the censoring time, and $X$ the $p$-vector of covariates. Define $\widetilde{T}=T\wedge C$, $\delta=I(T\le C)$, and $Z=(\widetilde{T},\delta, X)$. Suppose that the data consist of independent and identically distributed (i.i.d.) copies of $Z$ and denote it 
by $\mathcal{D}_N=\{Z_1,...,Z_N\}$ for large-scale data and $\mathcal{D}=\{Z_1,Z_2,...\}$ for online data, respectively. Here $N$ denotes the size of the dataset and is assumed to be large. We consider 
the accelerated failure time model which postulates \begin{equation}
\label{aft}
\log T=X^T\beta_0+\epsilon,
\end{equation}
where $\beta_0$ is a $p$-vector of unknown regression parameters, the stochastic error $\epsilon$ is independent of $X$, and its distribution is left unspecified. Because it provides a natural formulation of the effects of covariates on potentially censored response variable,
the model (\ref{aft}), along with the Cox proportional hazards model, are  two main approaches to the regression analysis of 
censored data \citep{kalbfleisch2002statistical,zeng2007efficient}. Define the residual $e_{i}(\beta)=\log \widetilde{T}_{i}-\beta^TX_{i}$. Let $N_{i}(\beta ; t)=\delta_{i} I\left(e_{i}(\beta) \leqslant t\right) \text { and } Y_{i}(\beta ; t)=I\left(e_{i}(\beta) \geqslant t\right)$
denote the counting and at risk processes of the residual, respectively.
Write $$
S^{(0)}(\beta ; t)=N^{-1} \sum_{i=1}^{N} Y_{i}(\beta ; t), \quad S^{(1)}(\beta ; t)=N^{-1} \sum_{i=1}^{N} Y_{i}(\beta ; t) X_{i}.
$$ 
For classical batch-based methods, the estimation and inference in model 
(\ref{aft})  often centers on solving
the weighted rank-based estimating equations which take the form 
\begin{equation}
\label{estimation}
U_{\phi}(\beta)=\sum_{i=1}^{N} \int_{-\infty}^{\infty} \phi(\beta ; t)\left\{X_{i}-\overline{X}(\beta ; t)\right\} d N_{i}(\beta ; t),
\end{equation}
where $
\overline{X}(\beta ; t)=S^{(1)}(\beta ; t) / S^{(0)}(\beta ; t), \text { and } \phi
$ is a possibly data-dependent weight function. The choices of $\phi=1$ and $\phi=S^{(0)}$ correspond
to the log-rank \citep{mantel1966evaluation} and Gehan statistics \citep{gehan1965generalized}, respectively. In particular, with $\phi=S^{(0)}$,
\begin{equation}
\label{eq:gehan}
S_N(\beta)=N^{-1} \sum_{i=1}^{N} \sum_{j=1}^{N} \delta_{i}\left(X_{i}-X_{j}\right) I{\left\{e_{i}(\beta) \leqslant e_{j}(\beta)\right\}}.
\end{equation}
Solving (\ref{eq:gehan}) is 
equivalent to minimizing the objective function 
\begin{equation}
L_N(\beta)=N^{-1} \sum_{i=1}^{N} \sum_{j=1}^{N} \delta_{i}\left\{e_{i}(\beta)-e_{j}(\beta)\right\}^{-}, \label{obj:gehan}
\end{equation}
where  \(a^{-}=|a| I{\{a<0\}}\).
%For large-scale data, 
The optimization can be formulated as a linear programming
problem and solved by standard statistical packages \citep{jin2003rank,koenker2005quantile,chiou2014fitting}. 
This approach yields numerically 
efficient estimation and inference procedures 
for the accelerated failure time model.
However, it requires the entire dataset to be stored in memory and the computational complexity is of order $O(N^2)$. % and 
%hence numerically less attractive. For online data, this approach is no longer applicable as the data arrive in streams. 
When $N$ is extremely large, as in large-scale RWD, or in an online setting, as in streaming RWD, these batch-based estimation methods become numerically infeasible. %This challenge motivates us to propose novel estimation methods based upon stochastic gradient descent \citep{robbins1951stochastic} to conduct regression analysis of survival data that is applicable to large-scale RWD and streaming RWD. 
Instead, online learning tools avoid the problem of managing the large-scale data exceeding the size of the memory and are applicable to the streaming data where the observations arrive sequentially by sharing the property of analyzing one observation at a time \citep{bottou2005online}.
%Online learning, subsampling, and divide-and-conquer are the three main statistical approaches to analyzing big data, according to %a survey conducted by \cite{wang2016statistical}. 
As an online learning tool, stochastic gradient descent (SGD) algorithms have recently regained a great deal of attention in the statistical community for analyzing big data since nowadays it is becoming increasingly prevalent in practice to manage and process big data that are much larger than the memory of a typical PC \citep{bottou2010large}. Stochastic gradient descent, as a stochastic approximation method,  processes one data point at a time upon its arrival. For example, suppose that we have $N$ i.i.d. observations, $\beta$ denotes the model 
parameter, and  $g_i(\beta)$ is minus log likelihood of the $i$th observation, $i=1,...,N$. 
The maximum likelihood estimates of $\beta$ can then be
solved by minimizing the objective function
\begin{equation}
g(\beta)=\frac{1}{N}\sum_{i=1}^{N}g_i(\beta).\label{eq: general-like}
\end{equation}
%where $g_i(\beta)$ ia a function of observation $i$, and $g_i$ are %i.i.d. copies. Assuming that $G(\beta)=\mathbb{E}g_i(\beta)$ is %minimized uniquely at $\beta_0$, by the law of large number, we can %7show that the above estimate is consistent. 
Instead of using the Newton-Raphson algorithm to directly minimize $g(\beta)$, the SGD method calculates the estimates by 
recursively updating the estimates  upon the arrival of each observation, starting with some initial estimates $\wh{\beta}_{0}$, 
for $n=1,...,N$,
\begin{equation}
\wh{\beta}_n=\wh{\beta}_{n-1}-\gamma_n \nabla g_n(\wh{\beta}_{n-1}),\label{eq: general-sgd}
\end{equation}
where $\gamma_n$ is some learning rate. This approach provides a numerically convenient and memory efficient approach for large-scale or online applications. The estimator $\wh{\beta}_n$ or its variants has been shown to exhibit good properties 
such as asymptotic consistency and normality under some regularity conditions \citep{ruppert1988efficient, polyak1992acceleration}. The SGD algorithms have been successfully applied to implement linear regression, logistics regression, and robust regression; see for example \cite{moulines2011non} and \cite{fang2018online}. 
It is important to note that stochastic gradient descent is not directly applicable to the
estimation based on (\ref{eq:gehan}) because it involves the calculation of the ranks of $e_i(\beta), i=1,...,N$ and upon the arrival of each data point, 
the recursive updating cannot be carried out without resorting to the entire dataset. To be more
specific, denote the gradient 
of the objective function (\ref{obj:gehan}) that involves the observation $i$ by
$\nabla L_i(\beta)$. We find that $L_i(\beta)=\delta_{i}\sum\limits_{j=1}^{N} \left\{e_{i}(\beta)-e_{j}(\beta)\right\}^{-}$ and it involves all the observations in the set $\mathcal{R}_i(\beta)=\{j: e_{i}(\beta) \leq e_{j}(\beta)\}$. To evaluate $\nabla L_i(\beta)$, we need to have the entire set of observations. However, having the entire set of observations is what the SGD method attempts to avoid. %especially for large-scale data or online data. 
%the weighted rank-based estimation equations involve the ranks 
%. To see this,
%note that rank-based estimation of $\beta$ solves 
%Define $e_{i}(\beta)=\log \tilde{T}_{i}-\beta^{\prime} X_{i}$, 
%The weighted log-rank estimating function for $\beta_0$  takes the form
%We first review the SGD method. 
%4 We expect that $\hat{\beta}_n$ or its variants (such as the Ruppert-Polyak averaging) possess good properties if the following conditions hold: (1). 
% involves only observation $i$; (2). $G(\beta)$ and $g_i(\beta)$ are smooth in some sense; (3). $G(\beta)$ is strictly convex at $\beta_0$.
%The difficulty for applying the SGD method to the  Gehan-type weight function lies in the lack of condition. 
%Because in \ref{min:Obj}, 
To address this problem and overcome the difficulty of the original SGD method, we propose a new strategy which %Developing such a SGD method remains unexplored in the literature and to of the regular SGD method, we propose a 
%novel strategy for minimizing equation(\ref{obj:gehan}). The strategy 
retains the numerical simplicity of SGD in recursive 
and online updating as well as caters to the specific nature of rank-based analysis of survival data.
Therefore, the proposed estimation procedure scales well for large-scale and streaming survival data.
 Furthermore, apart from scalable estimation, there remains the core inferential need to assess the quality and quantify the uncertainty of the proposed SGD stimator. The ability to assess estimator quality efficiently is essential to allow efficient use of available resources by processing only as much data as is necessary to achieve a desired accuracy or confidence. 
% To facilitate 
%statistical inference with SGD in survival data,  we advocate the use of an online random perturbation-based %resampling strategy. 
We propose an online resampling method which allows scalable inference in an online and parallel manner.
It preserves the automatic nature of the original bootstrap and
is thus applicable to a wide variety of inferential problems. 
The rest of this paper is organized as follows.  In Section 2, we 
propose a  scalable SGD method for large-scale or online survival data and study the asymptotic properties
of the proposed estimator. In Section 3,  we propose an online resampling strategy and establish the theory to justify its validity.  Simulation studies and an application to the Surveillance, Epidemiology, and End Results (SEER) breast cancer data are provided in Section 4 to examine the performance of the proposed method.
All the proofs are presented in the Appendix.

\section{Method}
\subsection{Estimation}
Because the summands in (\ref{estimation}) or (\ref{eq:gehan}) involve more than one data point, the original SGD approach which sequentially makes use of one data point for each recursive 
updating is no longer applicable. To address this problem, we propose a  new stochastic gradient descent method which sequentially updates the estimates upon the arrival of every $k$ data points. Here $k$ is a fixed integer greater than $1$ to allow the recursive updating to be effectively carried out based on (\ref{estimation}) or (\ref{eq:gehan}). For ease of exposition, we consider the large-scale setting 
but the method can be applied to the online setting in a similar fashion.  Without loss of generality, it is assumed that $N=nk$. The
data $\mathcal{D}_N=\{Z_1,...,Z_N\}$ 
consist of $N$ i.i.d. copies of $Z=(\wt{T},\delta,X)$. Let $D_i=\{(\wt{T}_{i,l}, \delta_{i,l}, X_{i,l}),1\leq l\leq k\}$ denote
the $k$ data points in the $i$th updating, $i=1,...,n$. Based on $D_i$, it can be shown that the true regression parameter $\beta_0$ minimizes the expectation of the 
objective function
\begin{eqnarray}
l_{i}(\beta)=\frac{1}{k}\sum_{l=1}^{k}\sum_{j=1}^{k} \delta_{i,l}\left\{e_{i,l}(\beta)-e_{i,j}(\beta)\right\}^{-}, \label{min: obj}
\end{eqnarray}
which has the gradient
\begin{eqnarray}
s_{i}(\beta)=\frac{1}{k}\sum_{l=1}^{k}\sum_{j=1}^{k}\delta_{i,l}\left(X_{i,l}-X_{i,j}\right) I{\left\{e_{i,l}(\beta) \leqslant e_{i,j}(\beta)\right\}}.\label{eq: likeli-si}
\end{eqnarray}
In the same spirit of the original SGD, we propose the following 
recursive algorithm. Starting from some initial estimate $\wh{\beta}_0$, for $i=1,...,n$,
 we update the estimate  via
\begin{eqnarray}
\wh{\beta}_i=\wh{\beta}_{i-1}-\gamma_i s_{i}(\wh{\beta}_{i-1}),\label{eq: batch-sgd}
\end{eqnarray}
where the learning rates are $\gamma_i=\gamma_1 i^{-\alpha}$ with $\gamma_1>0$ and $\alpha\in (0.5, 1)$.
Furthermore,  as suggested by \cite{ruppert1988efficient} and \cite{polyak1992acceleration}, we consider the averaging estimate,
\begin{eqnarray}\label{SGD-avg}
\ol{\beta}_n=\frac{1}{n}\sum_{i=1}^n\wh{\beta}_i,
\end{eqnarray}
which can also be recursively updated given that $\ol{\beta}_i=(i-1)\ol{\beta}_{i-1}/i+\wh{\beta}_i/i$, $i=1,...,n$.
It can be seen that the proposed  method retains the appealing properties of SGD  such as numerical convenience and memory efficiency. It scales well with the size of the dataset and is readily applicable to large-scale and online applications. Next we study its asymptotic properties. 
\subsection{Limiting distribution}
%Assume that there are $p$ covariates and let $\beta=(\beta^{(1)}, \cdots, \beta^{(p)})$. 
It is assumed that $C$ is independent of  $T$ conditional on $X$. Throughout the paper, we shall use  $F$, $f$ and  $\overline{F}=1-F$
to denote the distribution, density and survival functions of $\varepsilon$, respectively. The conditional distribution, 
density and survival functions of $\log C$ given $X$ are denoted by $G(\cdot|X), g(\cdot|X)$ and $\overline{G}(\cdot|X)=1-G(\cdot|X)$,
respectively. Let $L(\beta)=\mathbb{E} l_{i}(\beta)$, $S(\beta)=\mathbb{E} s_{i}(\beta)$ and 
$H(\beta)=\nabla^2 L(\beta)$ be the Hessian matrix of $L(\beta)$. Define  $H_0=H(\beta_0)$
and $V=\mbox{Cov}(s_i(\beta_0))$. 

First we introduce the following assumptions.
%Let $\lambda_1(H)\leq \cdots \leq \lambda_p(H)$ be eigenvalues of $H$. 
\begin{enumerate}	
	\item[(A1)] The covariate $X$ is bounded and the matrix $\mbox{Cov}(X)$ is full rank.
	
	\item[(A2)]  The error density function 
	$f$ and its derivative $f^{'}$  
	are bounded.
	 
	\item[(A3)] The conditional density function 
	of  $C$ and its derivative 
	are bounded.
	
	\item[(A4)] The matrix $H_0$ is strictly positive definite.%; i.e. $\lambda_j(H)>0, j=1, \cdots, p$.
	
%	\item[(A5).] The learning rate $\gamma_i$ are chosen as  %$\gamma_i=\gamma_1 i^{-\alpha}$ with $\gamma_1>0$ and $\alpha\in (0.5, 1)$.
	
	%\item[(A4).] Random weights $W_1, W_2, \cdots$, are i.i.d. with $W$, where $P(W\geq 0)=1 $ and  $\mathbb{E} W_i = Var(W_i)=1$.
\end{enumerate}
We establish the asymptotic normality of the SGD estimator $\ol{\beta}_n$ as follows.
\begin{thm}\label{th:aymnorma}
	Let $k$ be a fixed integer greater than 1. Under Assumptions A1-A4, as $n\rightarrow \infty$, we have (i)
	\begin{eqnarray}\label{linearrep}
		\sqrt{n} (\ol{\beta}_n-\beta_0) =-n^{-1/2} \sum _{i=1}^n H_0^{-1} s_{i}(\beta_0) + o_p(1).
	\end{eqnarray}
	and (ii)
	\begin{eqnarray}\label{asymdist}
		\sqrt{n} (\ol{\beta}_n-\beta_0) \Rightarrow N(0, \Sigma),
	\end{eqnarray}
	where $\Sigma=H_0^{-1} V H_0^{-1}$, $V={\normalfont \mbox{Cov}}(s_i(\beta_0))$,
	and  $s_i(\beta)$ is defined in (\ref{eq: likeli-si}).
\end{thm}
\noindent {\bf Remark 1.} It is important to note that in practice,
the proposed SGD is run only $n$ steps. Therefore, $n$ is 
required to be large to ensure that the SGD  runs until convergence and the above theoretical results
hold.

\subsection{Asymptotic relative efficiency}\label{are}
Let $\wt{\beta}_N$ denote the classical batch-based estimator which minimizes the objective function (\ref{obj:gehan}) involving the entire dataset. Let
$\Gamma_0(u)=\mathbb{E}[\overline{G}(u+X^T\beta_0|X)]$,
$\Gamma_1(u)=\mathbb{E}[X\overline{G}(u+X^T\beta_0|X)]$,
and $\Gamma_2(u)=\mathbb{E}[X^{\otimes 2}\overline{G}(u+X^T\beta_0|X)]$.
 As $N\rightarrow \infty$,  it is known that 
 $\sqrt{N}(\wt{\beta}_N-\beta_0)$ is 
asymptotically normally distributed with mean zero and the covariance 
$A^{-1}BA^{-1}$, where
\begin{equation}
A=\int_{-\infty}^{\infty}\ol{F}(u)[\Gamma_2(u)\Gamma_0(u)-\Gamma_1(u)\Gamma_1^T(u)]\frac{\lambda^{'}(u)}{\lambda(u)}dF(u),\quad  \label{eq: V}
\end{equation}
\begin{equation}
B=\int_{-\infty}^{\infty}[\ol{F}(u)\Gamma_0(u)]^2[\Gamma_2(u)-\frac{\Gamma_1(u)\Gamma_1^T(u)}{\Gamma_0(u)}]dF(u),\quad
\end{equation}
and
$\lambda(\cdot)$ is the hazard function of $\epsilon$, and 
$\lambda^{'}(u)=d \lambda(u)/du$ \citep{tsiatis1990estimating, ying1993large}.
Therefore both the classical batched-based  method  and the proposed SGD method yield  asymptotically unbiased and normally distributed estimates.
For any given $a\in \mathbb{R}^p$, when comparing these two methods in estimating $a^T\beta$,  a measure of asymptotic relative efficiency of 
the proposed SGD method relative to the classical batch-based method can be defined as
$$RE(k)=\frac{1}{k}\frac{a^TA^{-1}BA^{-1}a}{a^TH_0^{-1}VH_0^{-1}a}.$$ 
%Because $RE(k)$ quantifies how the asymptotic relative efficiency depends on $k$, the following result holds
\begin{thm}\label{th:aeff}
Under Assumptions A1-A4, for any $a\in \mathbb{R}^p$,  we have $RE(k)\le 1$.
Furthermore, when $k\rightarrow \infty$,  $RE(k)=1-O(\frac{1}{k})\rightarrow 1$.
\end{thm}
\noindent The above theorem suggests the efficiency of the proposed method when $k$ is large, as well expected.
 %Because
%it involves 
%the unknown quantities such as the underlying density function and the censoring 
%distribution, 
We also use Monte-carlo method to evaluate $RE(k)$ numerically in Section \ref{simstudy}.
%By calculation,
%\[H_0=\int_{-\infty}^{\infty}E[X_2^2f^2(u)\overline{G}(u+\beta_0X_2|X_2)\overline{G}(u+\beta_0X_1|X_1)%]du\] 
%Next we calculate the relative efficiency of 
%The proposed SGD method provides a scalable way for the semiparametric analysis of survival data in large-scale or online settings. Despite its numerical advantages, it is of interest to assess its performance  in terms of the estimation accuracy if compared with 
%Different from the traditional method that takes the whole information from the observations, our method iteratively transfers the information of each batch through the updated gradient. To see the efficiency, we compute the relative efficiency by taking the ratio of Standard Deviation of the $k$ size Batch SGD to the whole batch Rank-based AFT estimator for each component $\beta_j$:
%$$eff_j=\sqrt{\operatorname{Var}(j)/\operatorname{Var}_0(j)},$$
%where $\operatorname{Var}(j)$ is the $j$-th diagnal value of $\Sigma=H^{-1} V H^{-1}$, $V$ and $H$ is defined in \ref{eq: H} and \ref{eq: V}. %$\operatorname{Var}_0(j)$ is the $j$-th diagnal value of 

\section{Inference}
From Theorem \ref{th:aymnorma}, we can conduct statistical inference based on $\ol{\beta}_n$ provided that we can estimate the covariance matrix $H_0^{-1}VH_0^{-1}$. Because $H_0$ 
involves the unknown hazard function, to bypass the difficulty of nonparametric smoothing in estimating the hazard function, we can use some resampling procedure to approximate the sampling distribution of $\sqrt{n}(\ol{\beta}_n-\beta_0)$. 
%In order to conduct statistical inference with the proposed SGD estimator $\ol{\beta}_n$,
We propose an online bootstrap resampling procedure, which recursively updates the  SGD estimate as well as a large number of randomly perturbed SGD estimates, upon the arrival of every $k$ observations. The resampling 
strategy based on the random perturbation has also been widely used for inference 
in classical batch-based methods \citep{rao1992approximation,jin2003rank,peng2008survival}.
 Specifically, let $\Omega=\left\{\omega_{1}, \cdots, \omega_{n
}\right\}$ be a set of i.i.d.~non-negative random variables with mean and variance equal to one. In parallel with (\ref{eq: batch-sgd}), with $\wh{\beta}^*_0\equiv \wh{\beta}_0$, upon receiving $D_i$, we recursively updates randomly perturbed  SGD estimates,
\begin{eqnarray}\label{psgd1}
s_i^*(\wh{\beta}^*_{i-1})&=&\frac{1}{k}\sum_{l=1}^{k}\sum_{j=1}^{k}\delta_{i,l}\omega_{i}\left(X_{i,l}-X_{i,j}\right) I{\left\{e_{i,l}(\wh{\beta}^*_{i-1}) \leqslant e_{i,j}(\wh{\beta}^*_{i-1})\right\}},\label{SGD-score-wt}\\
\label{psgd2}
\wh{\beta}^*_i&=&\wh{\beta}^*_{i-1}-\gamma_i s_i^*(\wh{\beta}^*_{i-1}),\label{SGD-wt}\\
\label{psgd3}
\ol{\beta}^*_i&=&\frac{1}{i}\sum_{j=1}^i\wh{\beta}^*_j.\label{SGD-avg-wt}
\end{eqnarray}
%For each $k$-tuple $D_i$, it is corresponding to the minimization problem 
%\begin{eqnarray}
%l^*_i(\beta)=\frac{1}{k}\sum_{l=1}^{k}\sum_{j=1}^{n} \delta_{i,l}\omega_{i,l}\left\{e_{i,l}(\beta)-e_{i,j}(\beta)\right\}^{-}, \label{min: obj_p}
%\end{eqnarray}
Recall that the observed data $\mathcal{D}_N=\{Z_1,...,Z_N\}$. 
We will show that $\sqrt{n}(\ol{\beta}_n-{\beta}_0)$ and $\sqrt{n}(\ol{\beta}^*_n-\ol{\beta}_n)|\mathcal{D}_N$ converge in distribution to the same limiting distribution. In practice, these results allow us to estimate the distribution of $\sqrt{n}(\ol{\beta}_n-{\beta}_0)$ by generating a large number, say $B$, of random samples of $\Omega$. We obtain  $\ol{\beta}^{*,b}_n$ by sequentially updating perturbed SGD estimates for each sample $\Omega^b, b=1, \dots, B$,
\begin{eqnarray}
s_i^*(\wh{\beta}^{*,b}_{i-1})&=&\frac{1}{k}\sum_{l=1}^{k}\sum_{j=1}^{k}\delta_{i,l}\omega_{i}^b\left(X_{i,l}-X_{i,j}\right) I{\left\{e_{i,l}(\wh{\beta}^{*,b}_{i-1}) \leqslant e_{i,j}(\wh{\beta}^{*,b}_{i-1})\right\}},\label{SGD-score-wt-b}\\
\wh{\beta}^{*,b}_i&=&\wh{\beta}^{*,b}_{i-1}-\gamma_i s_i^*(\wh{\beta}^{*,b}_{i-1}),\quad i=1,...,n\label{SGD-wt-b}\\
\ol{\beta}^{*,b}_n&=&\frac{1}{n}\sum_{i=1}^n\wh{\beta}^{*,b}_i,\label{SGD-avg-wt-b}
\end{eqnarray}
and then approximate the sampling distribution of $\ol{\beta}_n-\beta_0$ using the empirical distribution of $\{\ol{\beta}^{*,b}_n-\ol{\beta}_n, b = 1, ..., B\}$. Specifically, the covariance matrix of $\ol{\beta}_n$ can be estimated by the sample covariance matrix constructed from $\{\ol{\beta}^{*,b}_n,b = 1,...,B\}$. Estimating the distribution of $\sqrt{n}(\ol{\beta}_n-{\beta}_0)$ based on the distribution of $\sqrt{n}(\ol{\beta}^*_n-\ol{\beta}_n)|\mathcal{D}_N$ leads to the construction of $(1-\alpha)100\%$ confidence regions for $\beta_0$. The resulting inferential procedure retains the numerical simplicity of the SGD method, only using one pass over every $k$ data points. The proposed inferential procedure scales well for datasets with millions of data points or more, and its theoretical validity can be justified  with mild regularity conditions as shown in the next section.
\subsection{Theoretical justification}
In this section, we derive some theoretical properties of $\ol{\beta}_n^*$, justifying that the conditional distribution of $\ol{\beta}_n^*-\ol{\beta}_n$ given data $\mathcal{D}_N=\{Z_1, Z_2, \dots, Z_N\}$ can approximate the sampling distribution of $\ol{\beta}_n-\beta_0$, under the following assumptions. Let $\|\cdot\|$ be the Euclidean norm for vectors and the operator norm for matrices. 
 We first derive the asymptotically linear representation of $\ol{\beta}_n^*$ for any perturbation variables that are i.i.d.~random variables satisfying that $\mathbb{E}(\omega_{i})=1$.
\begin{thm}\label{th:rep}
 If Assumptions A1-A4 hold, and the perturbation variables, $\omega_{1},\cdots, \omega_{
 	n}$, are non-negative i.i.d.~random variables satisfying that $\mathbb{E}(\omega_{i
 })=1$, then we have,
	\begin{equation}\label{eq-th1}
	\sqrt{n}(\ol{\beta}_n^*-\beta_0)=-\frac{1}{\sqrt{n}}H_0^{-1}\sum_{i=1}^n s_i^*(\beta_0)+o_p(1).
	\end{equation}
\end{thm}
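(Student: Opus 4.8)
The plan is to observe that the perturbed recursion \eqref{SGD-score-wt}--\eqref{SGD-avg-wt} is \emph{formally identical} to the unperturbed recursion \eqref{eq: batch-sgd}--\eqref{SGD-avg}, with the weighted score $s_i^*$ of \eqref{SGD-score-wt} playing the role of $s_i$ of \eqref{eq: likeli-si}, and then to re-run the argument behind Theorem~\ref{th:aymnorma}(i) with $s_i^*$ in place of $s_i$. The single fact that makes this substitution legitimate is that the weighting does not change the population drift: since each $\omega_{i,l}$ is independent of the data $D_i$ and $\mathbb{E}(\omega_{i,l})=1$, for every fixed $\beta$ we have $\mathbb{E}\, s_i^*(\beta)=\mathbb{E}\, s_i(\beta)=S(\beta)$. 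Hence $S(\beta_0)=0$ and $\nabla S(\beta_0)=H_0$ are unchanged, so the \emph{same} Hessian $H_0$ governs the perturbed dynamics, which is exactly what \eqref{eq-th1} asserts. Throughout I treat the pair $(D_i,\{\omega_{i,l}\}_l)$ as the i.i.d.\ unit at step $i$, so that the $o_p(1)$ is read under the joint law of data and weights.

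First I would record the local expansion of the common mean function: by Assumptions A2--A3 (bounded densities and derivatives), $L(\beta)=\mathbb{E}\,l_i(\beta)$ is twice continuously differentiable with $\nabla L=S$ and $\nabla^2 L(\beta_0)=H_0$, so $S(\beta)=H_0(\beta-\beta_0)+o(\|\beta-\beta_0\|)$ near $\beta_0$. Next I would decompose, along the perturbed path, $s_i^*(\wh{\beta}_{i-1}^*)=S(\wh{\beta}_{i-1}^*)+\xi_i^*$ with $\xi_i^*=s_i^*(\wh{\beta}_{i-1}^*)-S(\wh{\beta}_{i-1}^*)$; because the pairs $(D_i,\{\omega_{i,l}\}_l)$ are i.i.d.\ across $i$, the $\xi_i^*$ form a martingale-difference sequence with respect to the natural filtration, and their conditional second moments are uniformly bounded using boundedness of $X$ (A1), $\delta\in\{0,1\}$, the bounded indicators, and $\Var(\omega_{i,l})<\infty$. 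I would then establish strong consistency $\wh{\beta}_i^*\to\beta_0$: the mean drift is $-S(\beta)=-\nabla L(\beta)$ with $L$ convex (the expectation of the convex per-batch objective \eqref{min: obj}) and $H_0\succ0$ by Assumption A4, so the Robbins--Monro conditions hold under the learning rate of Assumption A5. Feeding these ingredients into the Polyak--Ruppert averaging argument exactly as for Theorem~\ref{th:aymnorma}(i) yields $\sqrt{n}(\ol{\beta}_n^*-\beta_0)=-H_0^{-1}n^{-1/2}\sum_{i=1}^n \xi_i^*(\beta_0)+o_p(1)$, and since $S(\beta_0)=0$ gives $\xi_i^*(\beta_0)=s_i^*(\beta_0)$, this is precisely \eqref{eq-th1}.

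The main obstacle is not the averaging step, which is essentially a black box once its hypotheses are checked, but the two places where the weights interfere. First, the random weights destroy the exact convexity of the \emph{per-batch} objective that the unperturbed proof could exploit for consistency; the remedy is to argue in expectation, since $s_i^*$ remains conditionally unbiased for the convex population gradient $S=\nabla L$, so the contraction toward $\beta_0$ survives and the variance inflation from $\omega$ is merely a bounded multiplicative factor that leaves every moment bound intact. Second, the leading martingale term must be rewritten from ``noise at the iterate'' $\xi_i^*(\wh{\beta}_{i-1}^*)$ to ``noise at the truth'' $s_i^*(\beta_0)$; this requires a stochastic-equicontinuity bound on $s_i^*(\beta)-s_i^*(\beta_0)-\{S(\beta)-S(\beta_0)\}$ in a shrinking neighborhood of $\beta_0$, which I would control by combining the smoothness furnished by A2--A3 with the already-established consistency of $\wh{\beta}_i^*$.
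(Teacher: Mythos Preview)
Your proposal is correct and takes essentially the same approach as the paper: the paper's entire proof is to observe that $\mathbb{E}\,s_i^*(\beta)=\mathbb{E}\,s_i(\beta)=S(\beta)$ because $\mathbb{E}(\omega_{i,l})=1$, and then declare that the proof of Theorem~\ref{th:aymnorma}(i) can be rerun verbatim with $s_i^*$ in place of $s_i$. Your write-up in fact supplies more detail than the paper does---in particular your explicit treatment of $(D_i,\{\omega_{i,l}\}_l)$ as the i.i.d.\ unit, the martingale-difference structure of $\xi_i^*$, and the point that the per-batch convexity is lost but the population drift $S=\nabla L$ is unchanged---so nothing is missing; the only small caveat is that the moment bounds you invoke (and the step in the paper's proof of Theorem~\ref{th:aymnorma} that uses boundedness of $s_i$) require a finite second moment of $\omega_{i,l}$, which is implicit in the paper's earlier stipulation that the perturbation weights have unit variance but is not restated in the hypotheses of Theorem~\ref{th:rep}.
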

\noindent By Theorem \ref{th:rep}, letting $\omega_{i
}\equiv 1$, we derive the following representation for $\ol{\beta}_n$,
\begin{equation}\label{eq-W-equal-one}
\sqrt{n}(\ol{\beta}_n-\beta_0)=-\frac{1}{\sqrt{n}}H_0^{-1}\sum_{i=1}^n s_i(\beta_0)+o_p(1).
\end{equation}
Then, considering the difference between (\ref{eq-th1}) and (\ref{eq-W-equal-one}), we have
\begin{equation}\label{eq-W-diff}
\sqrt{n}(\ol{\beta}_n^*-\beta_n)=\frac{1}{\sqrt{n}}H_0^{-1}\sum_{i=1}^n \{s_i(\beta_0)-s_i^*(\beta_0)\} +o_p(1).
\end{equation}

Let $\mathbb{P}^*$ and $\mathbb{E}^*$ denote the conditional probability and expectation given the data. Starting from (\ref{eq-W-diff}), we derive the following theorem.

\begin{thm}\label{th:dis}
 If Assumptions A1-A4 hold and the perturbation variables, $\omega_{1},\cdots, \omega_{
 	n}$, are non-negative i.i.d.~random variables satisfying that $\mathbb{E}(\omega_{i
 })={\normalfont\mbox{Var}}(\omega_i)=1$, then we have
	\begin{equation}\label{eq-th2}
	\sup_{v\in\mathcal{R}^p}\left|\mathbb{P}^*\left(\sqrt{n}(\ol{\beta}^*_n-\ol{\beta}_n)\leq v\right)-\mathbb{P}\Big(\sqrt{n}(\ol{\beta}_n-\beta_0)\leq v\Big)\right| \rightarrow 0, {in\ probability. }
	\end{equation}
\end{thm}	
\noindent By Theorem \ref{th:dis}, the Kolmogorov-Smirnov distance between $\sqrt{n}(\ol{\beta}^*_n-\ol{\beta}_n)|\mathcal{D}_N$ and $\sqrt{n}(\ol{\beta}_n-\beta_0)$ converges to zero in probability. This validates our proposal of the perturbation-based resampling procedure for inference with the proposed SGD estimator 
$\ol{\beta}_n$.

\section{Numerical studies}

\subsection{Simulation studies}\label{simstudy}
Extensive simulation studies are conducted to assess the operating characteristics of
the proposed SGD methods. We generate the failure time from the model 
\begin{equation}\label{simu}
\log T=X^T\beta+\varepsilon,
\end{equation} 
where $X=\left(X_{1}, \ldots, X_{p}\right)^{T} \in \mathbb{R}^{p}$ follows a multivariate normal distribution with mean zero and the covariance matrix $\operatorname{Cor}(X_j,X_k)=0.3^{|j-k|}$ for $j,k=1, \cdots, p$. We let ${\beta}=\mathbf{1}_{p}$, and $\varepsilon$
follows the standard normal, logistic or extreme-value distribution. The 
censoring time is generated from the uniform distribution to yield 
a censoring proportion of $30\%$. We consider the learning rate $\alpha=0.7$, the sample size $N=50000, 100000$ and let $k=10,50,100$
to examine how the proposed procedures are influenced by different choices of $k$ 
in practice.  For each simulation setting, we repeat the data generation $1000$ times.
 For each data repetition, we use $\Omega=\left\{\omega_{1,
	}, \cdots, \omega_{
	n}\right\}$ as random weights and generate $B=200$ copies of random weights
from the standard exponential distribution. 
 Then, for each data repetition, we take the initial value $\wh{\beta}_{0}=\mathbf{0}_p$,  obtain the proposed SGD estimate (\ref{SGD-avg}) and apply the online resampling procedure to construct $95\%$ confidence intervals.  We report the bias (Bias), standard error (EmSd),
 average estimated standard error (AvSd) and the empirical coverage probability (CovP) of interval estimation at 95\% confidence level. The simulation results for $p=3$ are summarized in Table \ref{tab:p3c30}. We find that
the estimation is quite accurate, the estimation accuracy is robust to varying choices of $k$, the empirical coverage probabilities are close to the nominal level $95\%$, and the empirical standard errors are close to the average estimated standard errors. We also find that the proposed method works  well when the covariates are highly correlated and is insensitive to difference choices of initial estimates or different resampling sizes $B$. This indicates the good performance of the proposed SGD-based estimation and inference procedures.
%To gauge the performance of the proposed SGD method when the number of covariates 
%$p$ is moderately large,  we also conduct the simulation study for
To examine how the performance of the proposed method varies with the number of covariates $p$, we also let $p=16$ and $p=100$ and the results for 
the first three regression coefficients are summarized in Table \ref{tab:p16a100}. 
 We see that the proposed method works well for both $p=16$ and $p=100$.
For $p=100$, we also report the average time per simulation. We find that
with a small $k$, the savings in computational time are quite substantial and hence the advantages of using the proposed method with a small $k$ are more pronounced when 
$p$ is larger.

\begin{table}
	\caption{Simulation results with $p=3$:	Bias ($\times 10^{-2}$), EmSd ($\times 10^{-2}$), and AvSd ($\times 10^{-2}$) 
		\label{tab:p3c30}}
	{ \footnotesize \begin{center}
		\begin{tabular}{cccrrrrrrrr}
			\hline
	\multirow{2}{*}{$\varepsilon$} 	&	\multirow{2}{*}{Parameter}  &\multirow{2}{*}{$k$}&\multicolumn{4}{c}{$N=50000$}&\multicolumn{4}{c}{$N=100000$}\\ 
			\cline{4-11}
		&	&&Bias &  EmSd&AvSd&CovP&Bias & EmSd&AvSd &CovP\\
			\hline
	%		\multicolumn{11}{c}{}\\
		%	&&&&&&&&&\\
	Normal&		$\beta_1$			&10&-0.013&0.647&0.653&0.954&-0.022&0.444&0.458&0.947\\
		&	&50&0.020&0.580&0.590&0.953&0.012&0.409&0.416&0.945\\
		&	&100&0.078&0.577&0.589&0.955&0.043&0.413&0.412&0.945\\
			
			&$\beta_2$&	10&-0.017&0.661&0.681&0.955&0.047&0.471&0.481&0.957\\
			&&50&0.032&0.614&0.618&0.950&0.052&0.434&0.434&0.959\\
			&&100&0.129&0.613&0.617&0.946&0.098&0.431&0.431&0.941\\
			
			&$\beta_3$&	10&-0.021&0.632&0.650&0.952&0.013&0.447&0.458&0.953\\
			&&50&0.004&0.591&0.590&0.949&0.028&0.415&0.415&0.949\\
			&&100&0.070&0.588&0.589&0.950&0.062&0.416&0.412&0.933\\
		%	&&&&&&&&&\\
	%		&\multicolumn{10}{c}{Logistic}\\
		%	&&&&&&&&&\\
		Logistic&	$\beta_1$
			&10&-0.025&1.039&1.061&0.963&-0.045&0.726&0.739&0.946\\
	&	&50&0.018&0.965&0.972&0.946&-0.033&0.689&0.676&0.942\\
	&		&100&0.023&0.959&0.964&0.950&-0.007&0.681&0.672&0.942\\
			
	&		$\beta_2$&
			10&-0.017&1.061&1.120&0.959&0.013&0.778&0.775&0.953\\
	&		&50&0.016&0.992&1.018&0.958&0.030&0.710&0.710&0.951\\
	&		&100&0.096&0.993&1.011&0.958&0.064&0.706&0.704&0.949\\
			
	&		$\beta_3$&
			10&-0.095&1.054&1.063&0.955&-0.022&0.739&0.738&0.944\\
	&		&50&-0.068&0.978&0.972&0.953&-0.012&0.683&0.677&0.942\\
	&		&100&0.016&0.969&0.962&0.950&0.010&0.680&0.672&0.947\\
		%	&&&&&&&&&\\
	%		\multicolumn{10}{c}{Extreme value}\\
		%	&&&&&&&&&\\
	Extreme value&		$\beta_1$
			&10&-0.008&0.812&0.816&0.945&-0.002&0.576&0.567&0.948\\
			&&50&0.030&0.741&0.743&0.960&0.046&0.517&0.518&0.939\\
			&&100&-0.016&0.745&0.739&0.951&-0.022&0.510&0.513&0.946\\
			
			&$\beta_2$&
			10&0.017&0.865&0.855&0.950&0.015&0.583&0.593&0.953\\
		&	&50&0.075&0.782&0.777&0.946&0.054&0.527&0.541&0.949\\
		&	&100&-0.000&0.781&0.773&0.955&-0.011&0.523&0.538&0.942\\
			
		&	$\beta_3$&
			10&0.071&0.794&0.817&0.946&0.049&0.573&0.528&0.948\\
		&	&50&0.167&0.730&0.745&0.940&0.098&0.514&0.518&0.945\\
		&	&100&0.061&0.729&0.737&0.952&0.028&0.507&0.514&0.945\\
			\hline
		\end{tabular}
	\end{center}
}
\end{table}
\begin{table}
	\caption{Simulation results for the first three regression coefficients with $p=16$ or $p=100$: Bias ($\times 10^{-2}$), EmSd ($\times 10^{-2}$),
		and Time (seconds) 
		\label{tab:p16a100}}
	\begin{center}
		\begin{tabular}{cccrrrrrr}
			\hline
			%	\multirow{2}{*}{$p$} &		\multirow{2}{*}{Parameter}  &\multirow{2}{*}{$k$}&\multicolumn{3}{c}{$N=50000$}&\multicolumn{3}{c}{$N=100000$}\\ 
			%			\cline{4-9}
			%		&	&&Bias &  EmSd&CovP&Bias & EmSd &CovP\\
			%	\hline
			%			Censoring Rate = 30\%&&&&&&&&\\
			$p$ &		Parameter  &$k$&\multicolumn{3}{c}{$N=50000$}&\multicolumn{3}{c}{$N=100000$}\\ 
			\hline
			16&	&&Bias &  EmSd&CovP&Bias & EmSd &CovP\\
			
			&			$\beta_1$
			&10&-0.088&0.672&0.963&-0.047&0.480&0.954\\
			&			&50&-0.119&0.612&0.937&0.050&0.440&0.952\\
			&			&100&-0.129&0.605&0.944&-0.055&0.430&0.943\\
			
			&			$\beta_2$&
			10&-0.047&0.745&0.960&-0.001&0.484&0.956\\
			&	&50&-0.050&0.651&0.952&-0.005&0.423&0.967\\
			&	&100&-0.040&0.644&0.954&-0.007&0.419&0.960\\
			
			&	$\beta_3$&
			10&0.118&0.697&0.967&-0.013&0.522&0.940\\
			&	&50&-0.143&0.633&0.954&-0.029&0.464&0.937\\
			&	&100&0.136&0.623&0.958&-0.030&0.452&0.935\\
			100&	&&Bias &  EmSd&Time&Bias & EmSd &Time\\
			&			$\beta_1$
			&50&-0.133&0.671&38.361&-0.072&0.445&67.779\\
			&	&200&-0.161&0.653&171.305&-0.084&0.436&341.472\\
			&	&500&0.144&0.675&311.021&-0.071&0.444&618.769\\
			
			&	$\beta_2$
			&50&-0.064&0.713&&-0.033&0.468&\\
			&	&200&0.056&0.687&&-0.027&0.463&\\
			&	&500&0.025&0.709&&0.010&0.462&\\
			
			&$\beta_3$
			&50&-0.069&0.698&&-0.029&0.463&\\
			&	&200&-0.048&0.675&&-0.032&0.448&\\
			&	&500&0.035&0.693&&0.010&0.453&\\
			\hline		
			
		\end{tabular}
	\end{center}
\end{table}
We also compare the proposed rank-based SGD method, denoted by RSGD, with two alternative methods. One is the divide-and-conquer method, denoted by DC Rank,  which splits $N=nk$ observations into $n$ datasets each consisting of $k$ observations. The batch-based Gehan rank
estimation method is applied to each of $n$ datasets, the estimates 
are obtained by using the R package {\it aftgee} \citep{chiou2014fitting}, and $n$
estimates are then averaged to be the final estimator. The other method is to directly apply 
the SGD method to the parametric AFT model, denoted by PSGD, which
approximates the hazard function of $\varepsilon$ in model (\ref{aft}) by a piecewise constant function with $G$ grid points. For PSGD, we choose the step size $\gamma_1=1/50$, the number of
grid points $G=100, 200, 500$ or 
$1000$,  and take the initial estimates of the piecewise hazard function to be
obtained by assuming and fitting a Weibull regression model to the first $m=1000$ data points.
With $N=100,000$, standard extreme value error, censoring
proportion $20\%$, $p=3$ or $p=16$, simulation 
results are summarized in Table \ref{tab:comparision}. 
From the results in Table \ref{tab:comparision}, we can see that
for small or moderate $k$, DC rank exhibits large biases. For DC rank to exhibit 
a small bias, a large $k$ is needed and consequently, it will be  time-consuming to obtain the estimates. The computational inefficiency for DC rank with large $k$ is more pronounced when $p$ is larger. Comparatively, RSGD performs quite well with small $k$ in both of estimation accuracy and computational time. We also find that PSGD is computationally efficient 
but exhibit larger biases than RSGD. Because the iterative estimation in PSGD involves both
the regression coefficients and the piecewise hazard function, PSGD is 
quite sensitive to the number of grid points, the step size, and 
the initial estimates of the piecewise hazard function.  Note that the initial
estimates for the piecewise hazard function obtained by fitting
a Weibull regression model to the first $m=1000$ data points are quite 
good initial estimates since with the standard extreme value error, Weibull regression model
actually holds. In particular, when the initial estimates for 
the piecewise hazard function are not good, we find that PSGD exhibits quite large biases and does not perform well. Its theoretical investigation is also complicated by the choice
of grid points for approximating the baseline hazard function and the fact that
the resulting likelihood is non-smooth and may exhibit multiple local maxima \citep[p.1388]{zeng2007efficient}. In comparison with PSGD, RSGD only involves the regression coefficients in iterative estimation and its performance is quite good and insensitive to the initial values of the regression coefficients, step size $\gamma_1$ or learning rate $\alpha$. It is also important to note that under model misspecification when the AFT model (\ref{aft})
does not hold, RSGD still converges to the minimizer of the convex function $L(\beta)$, further illustrating its robustness.

\begin{table}
	\caption{Comparison of RSGD, DC Rank and PSGD
		with $N=100000$, standard extreme value 
		error and censoring proportion 20\%: Bias ($\times 10^{-2}$), EmSd ($\times 10^{-2}$),
		and Time (seconds) 
		\label{tab:comparision}}
	\begin{center}
{\footnotesize		\begin{tabular}{ccrrrrrrcrrr}
			\hline
			\multirow{2}{*}{Parameter} &\multirow{2}{*}{$k$}&\multicolumn{3}{c}{RSGD}&\multicolumn{3}{c}{DC Rank}&\multirow{2}{*}{$G$}&\multicolumn{3}{c}{PSGD}\\ 
			%\cline{3-5}\cline{6-8}\cline{10-12}
			&&Bias &EmSd  &Time &Bias &  EmSd  &Time	&&Bias &EmSd  &Time\\
			\hline
			%	\hline	
			\multicolumn{12}{c}{$p=3$}\\
			$\beta_1$			&10&-0.016&0.508&2.110&8.282&21.234&41.062&100&0.081&3.395&6.833\\
			&50&-0.003&0.466&1.569&1.223&0.468&14.755&200&-0.019&3.098&9.364\\
			&100&0.027&0.461&2.002&0.603&0.445&17.022&500&-0.038&3.337&12.930\\
			&10000&-&-&-&-0.068&0.439&1756.109&1000&0.025&3.242&17.838\\
			$\beta_2$
			&10&0.005&0.559&&9.669&36.704&&100&0.432&7.349&\\
			&50&0.030&0.498&&1.283&0.483&&200&-0.017&2.272&\\
			&100&0.068&0.493&&0.635&0.466&&500&-0.058&2.180&\\
			&10000&-&-&-&-0.024&0.471&&1000&0.081&2.208&\\
			$\beta_3$
			&10&-0.013&0.505&&6.751&38.198&&100&-0.056&2.141&\\
			&50&0.003&0.463&&1.259&0.501&&200&-0.069&1.334&\\
			&100&0.034&0.454&&0.630&0.480&&500&-0.020&1.816&\\
			&10000&-&-&-&-0.032&0.449&&1000&-0.092&1.803&\\	
			%\hline							
			\multicolumn{12}{c}{$p=16$}\\
			$\beta_1$
			&50&-0.054&0.457&6.447&1.406&0.582&72.730&100&0.118&3.826&7.333\\
			&100&-0.059&0.455&11.737&0.424&0.488&92.142&200&0.028&3.460&10.462\\		
			&200&-0.056&0.453&36.725&0.194&0.450&148.569&500&0.034&3.252&13.758\\		
			&&&&&&&&1000&0.148&3.302&16.325\\
			$\beta_2$
			&50&-0.015&0.481&&1.071&0.622&&100&0.196&8.180&\\
			&100&-0.012&0.474&&0.452&0.498&	&200&0.250&1.607&\\		
			&200&0.005&0.471&&0.211&0.465&&500&-0.024&2.086&\\	
			&&&&&&&&1000&0.081&2.187&\\	
			$\beta_3$
			&50&-0.026&0.493&&1.118&0.630&&100&0.048&2.457&\\
			&100&-0.020&0.486&&0.482&0.518&&200&0.102&1.846&\\		
			&200&-0.003&0.481&&0.241&0.481&&500&0.046&1.732&\\	
			&&&&&&&&1000&0.050&1.673&\\	
			\hline

		\end{tabular} }
	\end{center}
\end{table}
%\begin{table}
%	\caption{Relative efficiency for AFT model with logistic errors
%		\label{logistic_eff}}
%	\begin{center}
%		\begin{tabular}{ccccrrrrr}
%			\hline\hline
%			\multirow{2}{*}{Censoring Rate}&\multirow{2}{*}{Parameter}  &\multirow{2}{*}{N}&&
%			\multicolumn{5}{c}{$k$}\\
%			\cline{5-9}
%			&&&&10&20&50&100&200\\
%			\hline

	%		\\
	%		30\%&$\hat{\beta}^{(1)}$&50000&&1.146591 &1.077404 &1.069874 &1.067060 &1.064665\\
	%		&&100000&&1.045640 &1.034408 &1.022683 &1.0267311 &1.01708348\\ 
	%		
	%		&$\hat{\beta}^{(2)}$&50000&& 1.172194 &1.132584 &1.092178 &1.097722 &1.092520\\
	%		&&100000&& 1.085436 &1.011743 &1.010976 &1.0101473 &1.0033023\\
	%		
	%		\hline
	%	\end{tabular}			
	%\end{center}
%\end{table}

%\begin{table}
%	\caption{Relative efficiency for AFT model with extremal errors
%		\label{extreme_eff}}
%	\begin{center}
%		\begin{tabular}{ccccrrrrr}
%			\hline\hline
%			\multirow{2}{*}{Censoring Rate}&\multirow{2}{*}{Parameter}  &\multirow{2}{*}{N}&&
%			\multicolumn{5}{c}{$k$}\\
%			\cline{5-9}
%			&&&&10&20&50&100&200\\
%			\hline
%			20\%
%			\\
%			30\%&$\hat{\beta}^{(1)}$&50000&&1.177962 &1.112332& 1.077484 &1.059235& 1.068347\\
%			&&100000&&1.127419 &1.078394& 1.049333 &1.046005& 1.045439\\ 
			
%			&$\hat{\beta}^{(2)}$&50000&& 1.192773& 1.120212& 1.082508 &1.073803 &1.072405\\
%			&&100000&& 1.150572 &1.082777 &1.054141 &1.042728& 1.042751\\
			
%			\hline
%		\end{tabular}			
%	\end{center}
%\end{table}
%\subsection{Computational scalability}
Next, we examine the computational scalability of the proposed method and compare it with the batch-based method.  Note that the computational complexity of the batch-based method is $O(N^2)$.
%To show the advantage of Batch of dealing with large-scale dataset, we compare 
%We report the computation time of our proposed Batch SGD method for rank-based AFT model of varing batch $k$ with method of \cite{jin2003rank}, whose computational complexity if of order $O(N^2)$. According to \cite{jin2003rank}, the minimization of the 
By \cite{jin2003rank}, the optimization of (\ref{obj:gehan}) can be formulated as a linear programming problem and 
we 
% can be achieved by minimizing the following $L_1$ norm function in terms of $\beta$:
%$$
%\sum_{i=1}^{n} \sum_{j=1}^{n} \Delta_{i}\left|e_{i}(\beta)-e_{j}(\beta)\right|+\left|M-\beta^{\prime} \sum_{k=1}^{n} \sum_{l=1}^{n} %\Delta_{k}\left(X_{l}-X_{k}\right)\right|,
%$$
use the the R package {\it aftgee} \citep{chiou2014fitting} to  obtain the estimator. We let the error $\varepsilon$ follows the 
standard extreme value distribution and the censoring proportion is set to be $20\%$. The data are generated as before and we repeat the data generation $200$ times. We use $\mbox{Batch}$ to denote the classical  batch-based method and use $\mbox{RSGD}(k)$ to denote the proposed SGD method which updates the estimates every $k$ data points. The average computation time   
per simulation is summarized in Table \ref{tab:time} for varying $N$ and $k$. From Table \ref{tab:time}, we can see that for the classical batch-based method, the computational 
time increases linearly with $N^2$ and hence becomes memory inefficient or computationally prohibitive when $N$ is very large. However, the proposed SGD method scales well with the size of the dataset and is computationally and memory efficient, in particular, 
with choices of small $k$.
%the time cost increases linearly with $N$ and is robust to varying choices of $k$. 
%However,
 %is dramatically increased when the sample size increases. It  is greater than $10000$, mainly due to memory and time restrictions for the computation in practice.
%When $N$ is larger than $2000$, the 
 % Batch SGD shows enormous superiority in the computational performance. The time cost of Batch SGD is quite linear to the sample size $N$ regardless of the batch size $k$ of small scale, nevertheless, for method of , the computational cost is dramatically increased along with the sample size. When $N$ goes larger than 2000, the time consuming is too large to well determined due to the limitation of the computer. In summary, Batch SGD outperforms the original Rank-based algorithm regarding the computational scalability.

\begin{table}[H]
	\caption{The average computation time (seconds) per simulation with varying $N$ and $k$:
	$p=3$ and censoring proportion 20\%}\label{tab:time}	
	\begin{center}
		\begin{tabular}{cccccccc}
			\hline
			\multirow{2}{*}{Method} &&
			\multicolumn{6}{c}{$N$}\\
			\cline{3-8}
%			&&300&500&1000&2000&10000&50000&100000\\
			&&1000&2000&5000&10000&20000&50000\\
			\hline
%			SGD(10)&&0.0078&0.0132&0.0277&0.0526&0.2689&1.3188&2.6327\\
%			SGD(20)&&0.0065&0.0108&0.0218&0.0454&0.2219&1.1002&2.2203\\
%			SGD(50)&&0.0061 &0.0110&0.0197&0.0394&0.1995&1.0060&2.0141\\
%			SGD(100)&&0.0059&0.0103&0.0214&0.0418&0.2064&1.0682&2.1345\\
%			Batch&&1.0720&8.0820&119.7902&2541.4521&-&-&-\\
			RSGD(50)&&0.0325& 0.0252& 0.1195& 0.0566&0.2555& 0.8479\\
			RSGD(100)&&0.0733& 0.0714& 0.0914&0.1465&0.6175& 2.3856\\
			RSGD(200)&&0.1879&0.1370& 0.1865&0.3778& 1.0671&  4.7697\\
			RSGD(500)&&0.3275&0.3023& 0.4059& 0.6480&1.9212&  9.3798\\
			RSGD(1000)&&0.7670& 0.6207& 1.0716&1.7229&5.2440&22.7282\\
			Batch&&1.1157&4.4077&26.7893& 104.4491&696.4405&2970.832\\			
			\hline
		\end{tabular}			
	\end{center}
\end{table}

Finally, we evaluate the relative  efficiency of the proposed SGD estimation method to the batch-based estimation method (\ref{obj:gehan}). 
Because when $N$ is large, it is time-consuming to obtain the batch-based estimates,  we use the asymptotic relative efficiency formula $RE(k)$ in Section \ref{are} for the assessment. Because the formula 
involves the unknown density function and
the censoring distribution, we use the Monte-Carlo method to evaluate it and assess the impact of $k$ on the asymptotic relative efficiency. We consider the 
situation where $p=2$, $X_1$ and $X_2$ are independent 
standard normal, and the censoring proportion is $20\%$.
The results  are summarized in Table \ref{eff}. We can see that the performance 
of the proposed SGD method is quite robust to varying choices of $k$ if gauged by 
the asymptotic relative efficiency and when $k$ is moderately large such as $100$ or 
$200$, $RE(k)$ is close to $1$. This affirms our theoretical result in Section \ref{are} and indicates that in addition to its superior computational advantages,
the proposed method performs also well in terms of the estimation efficiency.

%also evaluate the relative efficiency mentioned in section 2.3 by simulation results and report them in %Table\ref{normal_eff}, Table\ref{logistic_eff} and Table\ref{extreme_eff} the ratio of the standard deviation %from Batch SGD algorithm to the standard deviation from the traditional one by Monte Carlo method, since the %theoretical value of the latter is complicated to be obtained. 

\begin{table}
	\caption{ Monte-carlo evaluation of the asymptotic relative efficiency $RE(k)$ 
		\label{eff}}
	\begin{center}
		\begin{tabular}{ccccrrrrr}
			\hline
			\multirow{2}{*}{$\varepsilon$}&\multirow{2}{*}{Parameter}  &\multirow{2}{*}{N}&&
			\multicolumn{5}{c}{$k$}\\
			\cline{5-9}
			&&&&10&20&50&100&200\\
			\hline
			Normal&$\beta_1$&50000&& 0.84544& 0.88343&0.93028& 0.90950& 0.91178\\
			&&100000&&0.88871&0.91784&0.94116&0.96727&0.96564\\ 
			
			&$\beta_2$&50000&& 0.84718& 0.90223& 0.91407& 0.93041&0.93866\\
			&&100000&&0.89221 &0.95048& 0.98471& 0.99638& 0.99110 \\
			\\
			Logistic&$\beta_1$&50000&&  0.74719& 0.77522& 0.80045& 0.80507& 0.80818\\
			&&100000&& 0.86510 &0.89969& 0.91819& 0.93230& 0.92696\\

			&$\beta_2$&50000&& 0.71640& 0.76113& 0.78608& 0.80496& 0.81454\\
			&&100000&& 0.81932& 0.86552& 0.90754& 0.90614& 0.92480\\
			%	30\%&$\hat{\beta}^{(1)}$&50000&&1.131040 &1.0940546 &1.0625862 &1.0622642 &1.051633\\
			%	&&100000&& 1.125230 &1.089512& 1.062515 &1.033843 &1.035579\\ 
			
			%	&$\hat{\beta}^{(2)}$&50000&&1.180263 &1.1551448& 1.1322041 &1.1219692 &1.149061\\
			%	&&100000&& 1.120811 &1.052098 &1.015530 &1.003625 &1.008980\\
			\\
			Extreme-value	&$\beta_1$&50000&&0.85502& 0.90839& 0.94051& 0.95154& 0.95391\\
			&&100000&& 0.87155& 0.92505& 0.96066& 0.96950& 0.97452\\
			
			&$\beta_2$&50000&& 0.85966& 0.91430& 0.94762& 0.97138& 0.97044\\
			&&100000&& 0.86105& 0.89223& 0.92108& 0.92361& 0.92038\\		
			\hline
		\end{tabular}			
	\end{center}
\end{table}

\subsection{An application to the SEER breast cancer data}
We applied the proposed method to the breast cancer data collected in the U.S. National Cancer Institute's Surveillance, Epidemiology, and End Results-SEER Public‐Use Database. The SEER program collects cancer incidence and survival data from population-based cancer registries covering approximately 34.6\% of the population of the United States. The database records a number of patient and tumor characteristics such as demographics, primary tumor site, tumor morphology, stage at diagnosis, and first course of treatment, and the registries follow up with patients for vital status. With such a large-scale dataset, it offers a unique opportunity to examine the effect of patient and tumor characteristics on survival which is of most relevance in many cancer survival studies \citep{rosenberg2005effect,mccready2000factors}.

We consider the data collected from 1998 to 2015 and select the data by the following criteria: 1) Races of either white or black; 2) Tumor size of less than 2cm; 
3) Cancer stages of In situ, Localized, Regional, Distant;
and %4) Survival time of at least one month; 
4) No missing values. The dataset consists of  $221,013$ observations. The 
response of interest is time to death due to breast cancer and the censoring proportion is about 90\%. The model (\ref{aft}) is employed to examine the covariate effects and 
the proposed SGD methods are used for estimation (\ref{SGD-avg}) and inference (\ref{SGD-avg-wt-b}). %since the classical batch-based approach (\ref{obj:gehan}) is numerically no longer applicable due to memory and time constraints. 
%We classify the patients' age to 5 groups: 1) Below 45; 2) 46-55; 3) 56-65; 4) 66-75; 5) Above 75, and take the fifth group as the reference group. 
The following covariates are included in our analysis.  1) Age at diagnosis ( 6 levels: Below 35, 36-45, 46-55, 56-65, 66-75, Above 75); 2) Race ( 1 for White and 0 for Black); 3) Tumor Grade ( 4 levels: Well differentiated, Moderately differentiated. Poorly differentiated, Undifferentiated); 4) Cancer Stage (4 levels: In situ, Localized, Regional, Distant); 5) Year of diagnosis (3 levels: 1997-2003, 2004-2009, 2010-2015); and 6) the logarithm of Tumor size (mm). The categories ``Above 75", ``Undifferentiated", ``Distant" are taken to be the reference levels for Age, Tumor Grade and Cancer Stage, respectively. 

%of the relative factor as the reference levels, and take the logrithem transformation of Tumor size since the range of it is auite large. 

The estimated regression coefficients along with their 95\% confidence intervals with varying choices of $k$ are reported in Table \ref{seer}.  %and reasonable for the trend of the survival time 
We find that the survival is longer for women of medium age 36-45,
compared with that of younger or older women. This result is consistent with 
previous works by \cite{wingo1998long} and \cite{rosenberg2005effect}.
%lower for younger and older women, compared with the survival of women of medium age 36-45. %The reason that younger women has small survival may due to the careless of their healthy condition. Compared to the older women, it's been found that they are likely to carry more advanced breast cancer, which makes the survival harder than early stage. 
The effect of Race on survival is also significant and consistent with previous studies \citep{li2003differences}. %The average survival time of White  is about 12\% longer than Black. 
For Tumor Grade, we see that
patients with smaller grade levels tend to live longer. For Cancer Stage, patients 
who are diagnosed at an earlier stage have a larger chance of being cured, especially
for the In situ stage. We also find that Year at diagnosis has 
a significant effect on survival, which indicates that the effectiveness of treatment 
improves over time along with advances of medical research. For 
Tumor size, the effect is quite pronounced and the tumor size is negatively 
correlated with the survival. It is important to note that for varying choices of $k$,
the proposed method yields fairly robust point and interval estimates for the regression coefficients.
This reaffirms our findings in simulation studies and demonstrates that the proposed estimation and inference procedures indeed provide a useful tool for  large-scale or online analysis of survival data which is becoming increasingly prevalent in practice.
\begin{table}
	\small\addtolength{\tabcolsep}{-5pt}
	
	\caption{The estimated regression coefficients and 95\% confidence interval (in parentheses)  for the  SEER breast cancer data  \label{seer}}
	\begin{center}
		\begin{tabular}{lcccc}
			\hline 
			Covariate&$k=10$&$k=20$&$k=50$&$k=100$\\
			\hline
			\multirow{2}{*}{Age (Below 35)} &0.4352& 0.4357&  0.4336&  0.4325\\
		&  (0.4077,  0.4628)&  (0.4054, 0.4659)& (0.4046, 0.4626)&  (0.4056, 0.4593)\\
		
		\multirow{2}{*}{Age (36-45)} &0.7345 & 0.7290&  0.7224 &  0.7251\\
		& (0.6868,  0.7822)  & (0.6902,  0.7678) & (0.6834, 0.7615)& (0.6906,  0.7595)\\
		
		\multirow{2}{*}{Age (46-55)} &0.6931& 0.7052&  0.7099& 0.7082\\
		&(0.4990,  0.8872)&  (0.6061,  0.8042)& (0.6497,  0.7701)&  (0.6607,  0.7556)\\
		
		\multirow{2}{*}{Age (56-65) }& 0.7117&  0.7173 & 0.7179&  0.7090\\
		&(0.6411,  0.7823)  & (0.6640,  0.7706) &(0.6683,  0.7674) &  (0.6673, 0.7506)\\
		
		\multirow{2}{*}{Age (66-75)}&  0.4026& 0.4037& 0.4068&  0.4030\\
		&  (0.3397,  0.4654 )&  (0.3416,  0.4659)& (0.3475,  0.4662) &  (0.3496,  0.4564)\\
		
		\multirow{2}{*}{Race}  &   0.4665& 0.4712& 0.4593&  0.4539\\
		&(0.2491,  0.6839) & ( 0.3332,0.6092)&   (0.3604,  0.5583)&  (0.3846, 0.5232)\\
		
		\multirow{2}{*}{Grade (Well differentiated)}&0.8245&  0.8270&  0.8445&  0.8391\\
		&(0.7568,  0.8921)&  (0.7777, 0.8764)&(0.7954,  0.8936) &  (0.7949,  0.8833)\\
		
		\multirow{2}{*}{Grade (Moderately differentiated)}& 0.5662&  0.5722& 0.5679&  0.5712\\
		&(0.5189,  0.6134) & (0.5312,  0.6132)&(0.5265,  0.6093)&  (0.5334, 0.6090)\\
		
		\multirow{2}{*}{Grade (Poorly differentiated)}&-0.0173& -0.0181& -0.0242& -0.0191\\
		&(-0.0688, 0.0341)& (-0.0545, 0.0184)&  (-0.0612, 0.0129)& (-0.0547,  0.0165)\\
		
		\multirow{2}{*}{Stage (In situ)} & 2.9620&  2.9605& 2.9653& 2.9629\\
		&(2.9047, 3.0192)& (2.9176, 3.0034)&(2.9299, 3.0007) & (2.9356,  2.9901)\\
		
		\multirow{2}{*}{Stage (Iocalized)} &  2.4285& 2.4393&  2.4235&2.4184\\
		&(2.3187, 2.5383) &  (2.3611, 2.5176)&(2.3613, 2.4857)&  (2.3616, 2.4751)\\
		
		\multirow{2}{*}{Stage (Regional)} & 1.7096& 1.714&  1.7174&  1.7220\\
		& (1.5626, 1.8564)  & (1.6091,1.8182)&(1.6503  , 1.7845) &  (1.6573  , 1.7866)\\
		
		\multirow{2}{*}{Year (1997-2003)} &  -0.1100& -0.0894& -0.1011&-0.1070\\
		&(-0.2329, 0.0129) &(-0.1800,  0.0012)&  (-0.1767 , -0.0255) & (-0.1770 ,-0.0369)\\
		
		\multirow{2}{*}{Year (2004-2009) }&-0.0590& -0.0646& -0.0581&-0.0556\\
		&(-0.1815,  0.0636) &(-0.1576,  0.0284)&(-0.1343,  0.0180 )& ( -0.1277 , 0.0165)\\
		
		\multirow{2}{*}{log(Tumor size)} & -0.6777&-0.6797& -0.6863& -0.6946\\
		&(-0.7492, -0.6061)& (-0.7189, -0.6404)&(-0.7252, -0.6473)& (-0.7289, -0.6603)\\		
		%	Tumor size & 0.9915& 0.9912&0.9901&0.9895\\
			\hline
		\end{tabular}
	\end{center}
\end{table} 

\section{Discussion}
We have presented a rank-based stochastic gradient descent (RSGD) method for large-scale or online survival data. The estimate is recursively updated upon the arrival of every $k$ observations.  Alternatively, for the divide and conquer (DC) approach,  $N=nk$ observations are split into $n$ datasets each consisting of $k$ observations, the batch estimation method is applied to each of $n$ datasets and $n$ estimates are then averaged to be the final estimate. We find that RSGD works well with small $k$ but for the DC estimator to exhibit small biases, large $k$ is needed. This indicates the advantages of RSGD under
memory and time constraints.

The proposed SGD method is built upon rank-based estimating equations which do not directly allow time-dependent covariates. For their incorporation, likelihood-based SGD methods can be considered. We find that the SGD method based on the parametric AFT model that takes the baseline hazard function to be piece-wise constant is quite sensitive 
to the initial estimates. This is consistent with the findings 
of \cite{zeng2007efficient} that its corresponding sieve profile log-likelihood 
is not smooth and may have multiple local maxima.  
A kernel-smoothed profile likelihood based 
SGD method inspired by their work can be explored. In particular,
it is of interest  to investigate how the smoothing 
parameter is chosen for SGD-based methods in both theory and practice.

\bibliographystyle{apa}
\bibliography{qaft}
\end{document}